\providecommand{\algorithmname}{Algorithm}
\newcommand{\be}{\begin{equation}}
\newcommand{\ee}{\end{equation}}
\newcommand{\bea}{\begin{eqnarray}}
\newcommand{\eea}{\end{eqnarray}}
\newcommand{\nn}{\nonumber}
\newtheorem{theorem}{Theorem}
\newtheorem{assumption}{Assumption}
\newtheorem{lemma}{Lemma}
\newtheorem{definition}{Definition}
\newtheorem{proof}{Proof}
\def\ba{\begin{array}}
\def\ea{\end{array}}
\def\a{\alpha}
\def\m{\mu}
\def\ge{\geqslant}
\begin{document}

\begin{frontmatter}
\title{Distributed Weighted Least-squares Estimation for Networked Systems with Edge Measurements\thanksref{footnoteinfo}}
\thanks[footnoteinfo]{This paper was not presented at any IFAC
meeting. This work was supported by the National Natural Science Foundation of China (Grant Nos.~61633014, 61803101 and U1701264).}
\author[SD]{Qiqi Yang}\ead{qiqiyang2016@163.com},
\author[newcastle]{Zhaorong Zhang}\ead{zhaorong.zhang@uon.edu.au},
\author[newcastle,GG]{Minyue Fu}\ead{minyue.fu@newcastle.edu.au}
\corauth[cor]{Corresponding author: Zhaorong Zhang, Tel. +61-408528333.}
\address[SD]{School of Control Science and Engineering, Shandong University, Jinan 250061, China.}       
\address[newcastle]{School of Electrical Engineering and Computing, The University of Newcastle, NSW 2308, Australia.}        
\address[GG]{School of Automation, Guangdong University of Technology, 
Guangzhou 510006, China.}       
\begin{keyword}                           
Weighted least-squares estimation; distributed estimation; belief propagation; distributed algorithm.
\end{keyword} 

\begin{abstract}
This paper studies the problem of distributed weighted least-squares (WLS) estimation for an interconnected linear measurement network with additive noise. Two types of measurements are considered: self measurements for individual nodes, and edge measurements for the connecting nodes. Each node in the network carries out distributed estimation by using its own measurement and information transmitted from its neighbours. We study two distributed estimation algorithms: a recently proposed distributed WLS algorithm and the so-called Gaussian Belief Propagation (BP) algorithm. We first establish the equivalence of the two algorithms. We then prove a key result which shows that the information matrix is always generalised diagonally dominant, under some very mild condition. Using these two results and some known convergence properties of the Gaussian BP algorithm, we show that the aforementioned distributed WLS algorithm gives the globally optimal WLS estimate asymptotically. A bound on its convergence rate is also presented.
\end{abstract}

\end{frontmatter}

\section{Introduction}
As the applications for large-scale networked systems increase rapidly, 
distributed estimation algorithms for such systems is essential, and they are widely 
applied to sensor networks \cite{kar2012,lijl2007}, networked linear systems \cite{mou2015}, network-based state estimation \cite{damian2015}, multi-agent systems \cite{lin2014,lin2016}, multi-agent optimization \cite{nedic2009},  and so on.

In this paper, we are  interested in a distributed algorithm recently proposed in \cite{damian2015} (Algorithm 4 in \cite{damian2015}) to solve weighted least-squares (WLS) estimation for  large-scale networked systems. This algorithm is fully distributed and iterative. It was proved in \cite{damian2015} that this distributed algorithm produces the exact WLS solution (i.e., the globally optimal estimate) after a finite number of iterations, if the network graph is acyclic. For a general network graph, many simulations suggest that the distributed WLS algorithm in \cite{damian2015} is capable to generate the exact WLS solution asymptotically, although the theoretical verification is lacking. {\em The purpose of this paper is to analyze the convergence property of this distributed WLS algorithm for a class of general network graphs.}

Another pertinent distributed algorithm comes from seemingly unrelated field of stochastic learning,  used to compute the conditional means and variances from a large-scale Gaussian random field. This algorithm is known as Gaussian Belief Propagation algorithm \cite{weiss2001}, a variant of the celebrated Belief Propagation (BP) algorithm originally proposed by Pearl \cite{pearl1988} in 1988.  

We consider the {\em distributed WLS estimation problem} for an interconnected linear measurement network with additive noise. Each node in the network has an unknown variable. The available measurements can be divided into two types: 1) {\em self measurement} for an individual node, which involves the node variable only, and 2) {\em edge measurement} for an edge,  which involves the two joining nodes. The contributions of this paper are as follows:
\vspace{-4mm}
\begin{itemize}
\item Firstly, we compare the distributed WLS algorithm (Algorithm 4 in \cite{damian2015}) with the Gaussian Belief Propagation (BP) algorithm which is expressed using the information matrix of the measurement system and we establish their equivalence. 
\item We then prove a key result for the case of scalar variables to show that the information matrix is always generalised diagonally dominant, under some very mild condition. 
\item Using these two results and some known convergence properties of the Gaussian BP algorithm, we present several convergence results for the distributed WLS algorithm. For an acyclic graph with vector variables, the algorithm gives the globally optimal WLS estimate in a finite number of iterations.  For a cyclic graph with scalar variables, the algorithm gives the globally optimal estimate asymptotically. Moreover, a bound on its convergence rate is also provided.  
\end{itemize}

\vspace{-2mm}\section{PROBLEM FORMULATION}
Consider a measurement network with $n$ nodes with an associated graph $\mathcal{G}=(\mathcal{V},\mathcal{E})$ with $\mathcal{V}=\{1,\cdots,n\}$ and $\mathcal{E}\subset \mathcal{V}\otimes\mathcal{V}$. We use $\mathcal{N}_i=\{j:(i,j)\in \mathcal{E}\}$ to denote the set of neighbours of node $i$.  The {\em distance} between two nodes $i,j\in \mathcal{V}$ is the length of the shortest path between the two nodes. The {\em diameter} of $\mathcal{G}$ is the largest distance between any two nodes in $\mathcal{V}$. The graph $\mathcal{G}$ is known as the {\em measurement graph} and {\em communication graph}.

For each node $i\in\mathcal{V}$, $x_i\in \mathbb{R}^{n_i}$ denotes the {\em state} (or {\em variable}) of node $i$, and its measurements can be divided into two types: a {\em self measurement} $z_{i}$ involving $x_i$ only, and an {\em edge measurement} $z_e$ for each  $e=(i,j)$ involving both $x_i$ and $x_j$. These measurements are described by
\begin{align}
z_{i}&=A_{i}x_i+v_{i},\label{eq:self}\\
z_e&=B_{ij}x_i+B_{ji}x_j+v_e,\label{eq:edge}
\end{align}
where $v_{i}$ is the self measurement noise with normal distribution $\mathbf{N}(0,R_i)$, $v_e$ is the edge measurement noise (for edge $e$) with normal distribution $\mathbf{N}(0,R_e)$, $R_i>0$ and $R_e>0$ are covariances, and $A_i, B_{ij}$ and $B_{ji}$ are matrices of appropriate dimensions. The noises $v_{i}$ and $v_e$ are statistically independent, whenever $i\neq j$. Similarly, $v_{e_1}$ and $v_{e_2}$ are statistically independent for $e_1\neq e_2$.  Each self measurement (\ref{eq:self}) is known to node $i$ only, and each edge measurement (\ref{eq:edge}) is known to both nodes $i$ and $j$. 

Let the order of the nodes in $\mathcal{V}$ be $1,2,\cdots, n$ and the order of the edges in $\mathcal{E}$ be $e_1, e_2, \ldots, e_p$. Define 
\begin{align*}
x&=\mathrm{col}\{x_1,\cdots, x_n\},\nn\\
z &= \mathrm{col}\{z_1,\cdots, z_n, z_{e_1}, \cdots, z_{e_p}\}, \nn\\
v&= \mathrm{col}\{v_1,\cdots, v_n, v_{e_1}, \cdots, v_{e_p}\}, \nn\\
R&=\mathrm{diag}\{R_1, \cdots, R_n,R_{e_1}, \cdots,R_{e_p}\},\nn\\
H_i &= [0\ \cdots  \ 0\  A_i \ 0 \cdots 0], \ \forall i\in \mathcal{V}, \nn 
\end{align*}
\begin{align}
H_e &= [0\ \cdots  \ 0\  B_{ij} \ 0 \cdots 0 \ B_{ji} \ 0 \cdots 0], \ \forall e\in \mathcal{E},\nn \\
H&=\mathrm{col}\{H_1, \cdot, H_n, H_{e_1}, \cdots, H_{e_p}\}. 
\end{align}
We can rewrite the whole measurement model as
\begin{align}
z&=Hx+v. \label{eq0}
\end{align}

\vspace{-2mm}\begin{rem}
{\rm The above measurement models are widely used in practice. Self measurements are typically used to measure local variables such as temperature at a local point, absolute position of a sensor, voltage or current at a nodal point in a power network. Edge measurements can be used to measure {\em relative information} such as relative position, angle or velocity between two drones, pressure drop between two taps, and more subtle examples like current through a power branch.}
\end{rem}

The WLS estimate $x^{\star}$ of $x$ is defined to be
\begin{align}
x^{\star}&=\arg\min_{x}(z-Hx)^TR^{-1}(z-Hx). \label{eq:wls}
\end{align}
This can be rewritten as $\min_{x}(x^T\Psi x + 2 \alpha^T x)$
with
\begin{align}\label{eq2}
\Psi&=H^{T}R^{-1}H,\ \ \ 
\alpha=H^{T}R^{-1}z,
\end{align}
 and the solution can be given by
\begin{align}
\label{eq1} x^{\star}&=\Psi ^{-1}\alpha, 
\end{align}
which will be called the {\em globally optimal solution}. We stress that the WLS probem uses the measurement error covariances as the weighting matrices, which makes the solution optimal in the maximum likelihood sense. But this optimality relies on the accuracy of the covariances.

The goal for a {\em distributed WLS} solution is to derive a distributed algorithm in which node $i$ computes {\em only} the $i$-th component $x_i^{\star}$ of $x^{\star}$ in an iterative fashion using only the locally available measurements $z_{i,i}$ and $z_{(i,j)},  j\in\mathcal{N}_i$, and information exchange with its neighbouring nodes. 

\begin{assumption}
 \label{a1} The measurement graph $\mathcal{G}$ is connected.
 \end{assumption}
\begin{assumption} \label{a2} The measurement system $(\ref{eq0})$ has at least one self measurement for some nodes in the graph.
 \end{assumption}

\begin{rem}
{\rm Assumption~\ref{a2} permits the estimation problem to have a unique solution.}
\end{rem}

\begin{definition} \label{def1} \cite{berman1994} Denote $Z^{n\times n}=\{A=(a_{ij})\in \mathbb{R}^{n\times n}: a_{ij}\leq 0,i\neq j\}$.
A matrix $A\in Z^{n\times n}$ is an M-matrix if it can be expressed by $A=sI-B$, where $B=(b_{ij})$ with $b_{ij} \geq0$ for all $1 \leq i,j \leq n$ and $s\geq\rho(B)$. 
The comparison matrix of $A\in \mathbb{R}^{n\times n}$, denoted by $\bar{A}=(\alpha_{ij})\in Z^{n\times n}$, is given by
$\a_{ii}=|a_{ii}|$ and $\a_{ij}=-|a_{ij}|$ for all $i$ and $j\ne i$. 
\end{definition}

\begin{definition} \label{def2} A matrix $A\in \mathbb{R}^{n\times n}$ is said to be {\em generalized diagonally dominant} if there exists a diagonal matrix $D=\mathrm{diag}\{d_i\}$ with all  $d_i>0$ such that $AD$ is diagonally dominant, i.e.,
$|a_{ii}|d_i>\sum_{j\neq i}|a_{ij}|d_j$ for all $i$.
\end{definition}

\section{Main Results}

In this section, we present two distributed algorithms,  establish their equivalence and study the convergence properties of the distributed WLS algorithm.

\begin{algorithm}[ht]
\caption{Distributed WLS Algorithm \cite{damian2015} (using self and edge measurements directly)}
\begin{itemize}
\item \textbf{Initialization:} For each $i=1,\cdots,n$, node $i$ computes Compute $\Psi_{ii}$, $\alpha_i$ and $\hat{x}_i(0)=\Psi_{ii}^{-1}\alpha_i$, and transmits to each $j\in \mathcal{N}_i$ the initial messages:
\begin{align}
\Sigma_{i\rightarrow j}(0)&=\Psi_{ii}^{-1}, \ 
x_{i\rightarrow j}(0)=\Sigma_{i\rightarrow j}(0)\alpha_i. \label{eq:x0}
\end{align}
\item  \textbf{Main loop:} For $t=1,2,\cdots$, each node $i$ computes
\begin{align}
\hat{\Psi}_i(t)&=\Psi_{ii}-\sum_{v\in \mathcal{N}_i}\Psi_{vi}^{T}\Sigma_{v\rightarrow i}(t-1)\Psi_{vi}, \label{eq:temp1}\\
\hat{\alpha}_i(t)&=\alpha_i-\sum_{v\in \mathcal{N}_i}\Psi^{T}_{vi}x_{v\rightarrow i}(t-1), \label{eq:temp2}\\
\hat{x}_i(t)&=\hat{\Psi}^{-1}_i(t)\hat{\alpha}_i(t), \label{xxx}
\end{align}
 then, for each $j\in \mathcal{N}_i$, computes the new messages 
\begin{align}
\Sigma_{i\rightarrow j}(t)&=\left (\hat{\Psi}_i(t)+\Psi^{T}_{ji}\Sigma_{j\rightarrow i}(t-1)\Psi_{ji}\right )^{-1}, \label{eq:temp3}\\
x_{i\rightarrow j}(t)&=\Sigma_{i\rightarrow j}(t)\left (\hat{\alpha}_i(t)+\Psi^{T}_{ji}x_{j\rightarrow i}(t-1)\right ), \label{eq:temp4}
\end{align}
and transmits them to node $j$.
\end{itemize}
\label{al1}
\end{algorithm}

{\bf Distributed WLS Algorithm}

\vspace{-2mm}Algorithm $\ref{al1}$ is simplified from Algorithm 4 in \cite{damian2015} for the case with self and edge measurements only.
Denote $\alpha=\mathrm{col}\{\alpha_i\}$ and $\Psi=(\Psi_{ij})$. For $i\in \mathcal{V}$ and $j\in \mathcal{N}_i$, 
\begin{align}
\alpha_i &=A_i^{T}R_{i}^{-1}z_{i}+\sum_{j\in \mathcal{N}_i}B_{ij}^{T}R_{(i,j)}^{-1}z_{(i,j)}, \label{eq:alphai}\\
\Psi_{ii}&\hspace{-1mm}=A_i^{T}R_{i}^{-1}A_i+\hspace{-2mm}\sum_{j\in\mathcal{N}_i}B_{ij}^{T}R_{(i,j)}^{-1}B_{ij}, \ \Psi_{ij}\hspace{-1mm}=B_{ij}^{T}R_{(i,j)}^{-1}B_{ji}. \nonumber
\end{align}
The basic idea of the algorithm is as follows. In iteration $t=0$, each $i\in \mathcal{V}$ computes its $\Psi_{ii}$ and $\alpha_{i}$ locally. For each $j\in \mathcal{N}_i$, messages passed from $i$ to $j$ $\Psi_{i\rightarrow j}(0)$ and $\alpha_{i\rightarrow j}(0)$. At iteration $t>0$, $\hat{\Psi}_{i}(t)$, $\hat{\alpha}_i(t)$ and $\hat{x}_i(t)$ (the estimate of $x_i^{\star}$) are calculated  using $\Psi_{ii}$ and $\alpha_i$ and messages sent by $j\in \mathcal{N}_i$ in the previous iteration. Node $i$ uses all the messages sent from $v\in \mathcal{N}_i$ except $j$ to calculate $\Psi_{i\rightarrow j}(t)$ and $\alpha_{i\rightarrow j}(t)$, and sends them to node $j$. 

\begin{algorithm}[ht]
\caption{Gaussian BP Algorithm \cite{weiss2001} (based on $\Psi$)}
\begin{itemize}
\item \textbf{Initialization:}  For each $i=1,\cdots,n$, node $i$ computes $P_{ii}=A_i^TR_i^{-1}A_i$ and $\mu_{ii}=P_{ii}^{-1}\alpha_i$ using (\ref{eq:alphai}),  and transmits to each $j\in \mathcal{N}_i$ the initial messages $P_{i\rightarrow j}(0)=\Gamma_{ji}$ and 
$\m_{i\rightarrow j}(0)=0.$
\item  \textbf{Main loop:} For $t=1,2,\cdots$, each node $i$ computes
\begin{align}
\hspace{-4mm}P_i(t)&=P_{ii}+\sum_{v\in \mathcal{N}_i}P_{v\rightarrow i}(t-1), \label{eq5}\\
\hspace{-4mm}\m_i(t)&=P_i^{-1}(t)[P_{ii}\m_{ii}+\sum_{v\in \mathcal{N}_i}P_{v\rightarrow i}(t-1)\m_{v\rightarrow i}(t-1)], \label{eq6}
\end{align}
also, for each $j\in \mathcal{N}_i$, computes the new messages:\vspace{-1mm}
\begin{align}
\hspace{-4mm}P_{i\rightarrow j}(t)&=\Gamma_{ji}-\Psi_{ji}(\Gamma_{ij}+P_0(t))^{-1}\Psi_{ij},\label{eq7}\\
\hspace{-4mm}\m_{i\rightarrow j}(t)&=-P_{i\rightarrow j}^{-1}(t)\Psi_{ji}(\Gamma_{ij}+P_0(t))^{-1}P_0(t)\mu_0(t) \label{eq8} 
\end{align}
with\vspace{-1mm}
\begin{align} 
P_0(t)&=P_{ii}+\sum_{v\in \mathcal{N}_i\setminus j}P_{v\rightarrow i}(t-1), \label{eq9}\\
\mu_0(t)&=P_0^{-1}(t)[P_{ii}\m_{ii}+\hspace{-2mm}\sum_{v\in \mathcal{N}_i\setminus j}P_{v\rightarrow i}(t-1)\m_{v\rightarrow i}(t-1)], \label{eq10}
\end{align}
and transmits them to node $j$.
\end{itemize}
\label{al2}
\end{algorithm}

{\bf Gaussian BP Algorithm}

The Gaussian BP algorithm in \cite{weiss2001}  is shown in Algorithm~\ref{al2}. In the algorithm, the estimate of $x_i^{\star}$ at iteration $t$ will be given by $\mu_i(t)$. The algorithm is developed based on the joint distribution model for $x$. For the measurements (\ref{eq:self})-(\ref{eq:edge}), this is proportional to 
 \begin{align*}
&\prod_{i\in \mathcal{V}} \exp(-\frac{1}{2}\varepsilon_i^TR_i^{-1}\varepsilon_i)\prod_{(i,j)\in \mathcal{E}} \exp(-\frac{1}{2}\varepsilon_{(i,j)}^TR_{(i,j)}^{-1}\varepsilon_{(i,j)}) , 
\end{align*}
 where $\varepsilon_{(i,j)} = z_{(i,j)}-B_{ij}x_i-B_{ji}x_j$,
$ \varepsilon_i = z_i-A_ix_i.$ Rewriting the above gives 
 \begin{align*}
  & \prod_{i\in \mathcal{V}}\exp(-\frac{1}{2} x_i^TP_{ii} x_i-\alpha_i^Tx_i)\hspace{-3mm}\prod_{(i,j)\in \mathcal{E}}\hspace{-2mm} \exp(-\frac{1}{2}[x_i^T  x_j^T]V_{ij}\hspace{-1mm}\left [\begin{array}{c}x_i \\ x_j \end{array} \right ]\hspace{-1mm}), 
\end{align*}
where $P_{ii} = A_i^TR_i^{-1}A_i$, $\alpha_i$ is given in (\ref{eq:alphai}) and 
\begin{align*}
V_{ij} &= \left [\begin{array}{c} B_{ij}^T \\ B_{ji}^T\end{array}\right ]R_{(i,j)}^{-1}[B_{ij} \ \ B_{ji}] = \left [\begin{array}{cc} 
\Gamma_{ij} & \Psi_{ij} \\ \Psi_{ji} & \Gamma_{ji} \end{array}\right ]
\end{align*}
with $\Gamma_{ij}=B_{ij}^TR_{(i,j)}^{-1}B_{ij}$ and $\Gamma_{ji}=B_{ji}^TR_{(i,j)}^{-1}B_{ji}$.

{\bf Convergence Properties}

\vspace{-2mm}Our first result below compares these two algorithms.

\vspace{-2mm}\begin{theorem}
 \label{prop1} Algorithm \ref{al1} and Algorithm \ref{al2} are equivalent  in the sense that
 \begin{align}
 \hat{\Psi}_i(t) &= P_i(t+1); \ \ \hat{x}_i(t) = \mu_i(t+1), \ \ \forall t\ge0. \label{eq:eqiv}
 \end{align}
 \end{theorem}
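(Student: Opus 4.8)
The plan is to prove the two identities in (\ref{eq:eqiv}) by establishing, through a coupled induction on $t$, a one-to-one correspondence between the messages of the two algorithms. The starting point is two algebraic facts that tie the two parameterizations together: the symmetry $\Psi_{ji}^T=\Psi_{ij}$ (immediate from $\Psi_{ij}=B_{ij}^TR_{(i,j)}^{-1}B_{ji}$), and the decomposition $\Psi_{ii}=P_{ii}+\sum_{j\in\mathcal{N}_i}\Gamma_{ij}$ (immediate from (\ref{eq:alphai}) and $\Gamma_{ij}=B_{ij}^TR_{(i,j)}^{-1}B_{ij}$). I would also record $P_{ii}\mu_{ii}=\alpha_i$, which follows from the initialization $\mu_{ii}=P_{ii}^{-1}\alpha_i$.

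First I would handle the covariance/precision messages. Eliminating $\hat\Psi_i(t)$ from (\ref{eq:temp3}) using (\ref{eq:temp1}) gives the compact recursion $\Sigma_{i\rightarrow j}^{-1}(t)=\Psi_{ii}-\sum_{v\in\mathcal{N}_i\setminus j}\Psi_{iv}\Sigma_{v\rightarrow i}(t-1)\Psi_{vi}$. I would then prove by induction on $t$ the single identity $\Sigma_{i\rightarrow j}^{-1}(t)=\Gamma_{ij}+P_{ii}+\sum_{v\in\mathcal{N}_i\setminus j}P_{v\rightarrow i}(t)$, the base case $t=0$ being $\Sigma_{i\rightarrow j}^{-1}(0)=\Psi_{ii}$ matched against $P_{v\rightarrow i}(0)=\Gamma_{iv}$. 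The inductive step is where the correspondence is made: comparing the recursion above with the BP message (\ref{eq7}) read for the edge $v\to i$ shows that the inductive hypothesis is exactly the statement $(\Gamma_{vi}+P_0(t))^{-1}=\Sigma_{v\rightarrow i}(t-1)$, whence $\Psi_{iv}\Sigma_{v\rightarrow i}(t-1)\Psi_{vi}=\Gamma_{iv}-P_{v\rightarrow i}(t)$; substituting this and using the decomposition of $\Psi_{ii}$ collapses the sum to the claimed form. Summing the same substitution over all of $\mathcal{N}_i$ (rather than $\mathcal{N}_i\setminus j$) yields $\hat\Psi_i(t)=P_{ii}+\sum_{v\in\mathcal{N}_i}P_{v\rightarrow i}(t)=P_i(t+1)$, the first half of (\ref{eq:eqiv}).

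For the means I would pass to the information form $\eta_{i\rightarrow j}(t):=P_{i\rightarrow j}(t)\mu_{i\rightarrow j}(t)$, which by (\ref{eq8})--(\ref{eq10}) satisfies $\eta_{i\rightarrow j}(t)=-\Psi_{ji}\Sigma_{i\rightarrow j}(t-1)\bigl(\alpha_i+\sum_{v\in\mathcal{N}_i\setminus j}\eta_{v\rightarrow i}(t-1)\bigr)$, after using the covariance identity to replace $(\Gamma_{ij}+P_0(t))^{-1}$ by $\Sigma_{i\rightarrow j}(t-1)$ and $P_0(t)\mu_0(t)$ by $\alpha_i+\sum_{v\in\mathcal{N}_i\setminus j}\eta_{v\rightarrow i}(t-1)$. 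On the WLS side, eliminating $\hat\alpha_i(t)$ from (\ref{eq:temp4}) via (\ref{eq:temp2}) and the symmetry relation gives $x_{i\rightarrow j}(t)=\Sigma_{i\rightarrow j}(t)\bigl(\alpha_i-\sum_{v\in\mathcal{N}_i\setminus j}\Psi_{iv}x_{v\rightarrow i}(t-1)\bigr)$. A second induction then establishes $\eta_{i\rightarrow j}(t)=-\Psi_{ji}x_{i\rightarrow j}(t-1)$ for $t\ge1$ (the base case $t=1$ uses $\mu_{i\rightarrow j}(0)=0$, hence $\eta_{v\rightarrow i}(0)=0$, together with $x_{i\rightarrow j}(0)=\Sigma_{i\rightarrow j}(0)\alpha_i$). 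Finally, summing $\eta_{v\rightarrow i}(t)=-\Psi_{iv}x_{v\rightarrow i}(t-1)$ over $\mathcal{N}_i$ identifies $\hat\alpha_i(t)=\alpha_i+\sum_{v\in\mathcal{N}_i}\eta_{v\rightarrow i}(t)=P_{ii}\mu_{ii}+\sum_{v\in\mathcal{N}_i}P_{v\rightarrow i}(t)\mu_{v\rightarrow i}(t)$; multiplying by $\hat\Psi_i^{-1}(t)=P_i^{-1}(t+1)$ and comparing with (\ref{eq6}) gives $\hat x_i(t)=\mu_i(t+1)$.

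The delicate points, rather than the algebra, are the bookkeeping ones. The two inductions are coupled, since the mean recursion only simplifies after the covariance identity has been proven, so I would either settle the covariance statement in full first or carry both hypotheses simultaneously. I would also keep careful track of the systematic one-step offset (the hatted WLS quantities at $t$ correspond to BP quantities at $t+1$, while the transmitted messages correspond across the two algorithms at the same index), of the leave-one-out neighbour sums $\mathcal{N}_i\setminus j$ versus the full sums over $\mathcal{N}_i$, and of the consistent use of $\Psi_{ji}^T=\Psi_{ij}$. The only analytic prerequisite is that every matrix being inverted along the way is nonsingular, which I would note is inherited from $R>0$ together with Assumption \ref{a2}.
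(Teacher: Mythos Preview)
Your proposal is correct and follows essentially the same route as the paper. Both arguments rest on the same two message identities proved by induction: the paper packages them as $\Sigma_{i\rightarrow j}(t)=(\Gamma_{ij}+P_0(t+1))^{-1}$ and $\Sigma_{i\rightarrow j}^{-1}(t)x_{i\rightarrow j}(t)=P_0(t+1)\mu_0(t+1)$ carried together, while you prove the covariance identity first and then phrase the mean identity as $\eta_{i\rightarrow j}(t)=-\Psi_{ji}x_{i\rightarrow j}(t-1)$; these are equivalent reformulations (indeed the paper derives your $\eta$-relation as an intermediate step), and the final passage to $\hat\Psi_i(t)=P_i(t+1)$ and $\hat x_i(t)=\mu_i(t+1)$ is identical. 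One small caution: your closing remark that invertibility of all intermediate matrices ``is inherited from $R>0$ together with Assumption~\ref{a2}'' is asserted rather than argued, and is not obvious; the paper, however, glosses over this point as well.
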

 
\vspace{-2mm}\begin{proof}
We first claim that the following two equations hold  for any $t=0,1,2,\ldots$: 
\begin{align}
\Sigma_{i\rightarrow j}(t) &= (\Gamma_{ij}+P_0(t+1))^{-1}, \label{eq:temp11} \\
\Sigma_{i\rightarrow j}^{-1}(t) x_{i\rightarrow j}(t) &= P_0(t+1)\mu_0(t+1). \label{eq:temp12}
 \end{align}
 Proceed by induction. For $t=0$, from (\ref{eq9}), we see that 
 \begin{align*}
  P_0(1) &= P_{ii} + \sum_{v\in \mathcal{N}_i\backslash j} P_{v\rightarrow i}(0) = P_{ii}+ \sum_{v\in \mathcal{N}_i\backslash j} \Gamma_{iv}. 
  \end{align*}
 It follows from the above that 
  \begin{align*}
\Gamma_{ij}+P_0(1)  &= P_{ii} + \sum_{v\in \mathcal{N}_i} \Gamma_{iv} \\
&=A_i^TR_i^{-1}A_i + \sum_{v\in \mathcal{N}_i} B_{iv}^TR_{(v,i)}^{-1}B_{iv} =\Psi_{ii}.
\end{align*}
Using (\ref{eq:x0}), the above implies that (\ref{eq:temp11}) holds for $t=0$.  Similarly, using (\ref{eq10}) and (\ref{eq:x0}), we get
\begin{align*}
P_0(1)\mu_0(1) &= P_{ii}\mu_{ii} + \sum_{v\in \mathcal{N}_i\backslash j} P_{v\rightarrow i}(0)\mu_{v\rightarrow i}(0)\\
&= P_{ii}\mu_{ii} = \alpha_i=\Sigma_{i\rightarrow j}^{-1}(0)x_{i\rightarrow j}(0),
\end{align*}
which confirms (\ref{eq:temp12}) for $t=0$. 
Now, suppose (\ref{eq:temp11})-(\ref{eq:temp12}) holds for some $t=k, k\ge0$.  
From (\ref{eq7}) and (\ref{eq5}), we get
\begin{align*}
P_0(k+2)&= P_{ii}+\hspace{-2mm}\sum_{v\in \mathcal{N}_i \backslash j} P_{v\rightarrow i}(k+1) \\
&= P_{ii}+\hspace{-2mm}\sum_{v\in \mathcal{N}_i \backslash j}\Gamma_{iv} - \sum_{v\in \mathcal{N}_i \backslash j} \Psi_{iv}\Sigma_{v\rightarrow i}(k)\Psi_{vi}.
\end{align*}
It follows that 
\begin{align*}
\Gamma_{ij}+P_0(k+2) &= P_{ii} + \sum_{v\in \mathcal{N}_i} \Gamma_{iv} -\hspace{-2mm}\sum_{v\in \mathcal{N}_i \backslash j} \Psi_{vi}^T\Sigma_{v\rightarrow i}(k)\Psi_{vi} \\
&= \Psi_{ii}-\sum_{v\in \mathcal{N}_i \backslash j} \Psi_{vi}^T\Sigma_{v\rightarrow i}(k)\Psi_{vi} \\
&=\Sigma_{i\rightarrow j}^{-1}(k+1),
\end{align*}
which verifies (\ref{eq:temp11}) for $t=k+1$.  
Next, using (\ref{eq8}), we get
\begin{align*}
&P_{i\rightarrow j}(k+1)\mu_{i\rightarrow j}(k+1)\\
 =& -\Psi_{ji}(\Gamma_{ij}+P_0(k+1))^{-1}P_0(k+1)\mu_0(k+1)  \\
 =& -\Psi_{ji}\Sigma_{i\rightarrow j}(k)\Sigma_{i\rightarrow j}^{-1}(k)x_{i\rightarrow j}(k) 
 = - \Psi_{ji}x_{i\rightarrow j}(k).
 \end{align*}
Using (\ref{eq10}) and the above, we have
\begin{align*}
&P_0(k+2)\mu_0(k+2) \\
=&P_{ii}\mu_{ii} + \sum_{v\in \mathcal{N}_i\backslash j} P_{v\rightarrow i}(k+1)\mu_{v\rightarrow i}(k+1) \\
=& \alpha_i -  \sum_{v\in \mathcal{N}_i\backslash j} \Psi_{vi}^Tx_{v\rightarrow i}(k)\\
=& \hat{a}_i(k+1) + \Psi_{ji}^T x_{j\rightarrow i}(k) 
=\Sigma_{i\rightarrow j}^{-1}(k+1)x_{i\rightarrow j}(k+1).
\end{align*}
The last two steps used (\ref{eq:temp2}) and (\ref{eq:temp4}). This verifies (\ref{eq:temp12}) for $t=k+1$. By the principle of induction, (\ref{eq:temp11})-(\ref{eq:temp12}) are verified for all $t=0,1,2,\ldots$. 

\vspace{-1mm}Finally, we take any $t=0,1,\ldots$ and proceed to prove (\ref{eq:eqiv}). Using (\ref{eq:temp11})-(\ref{eq:temp12}) and (\ref{eq7}), we obtain
\begin{align*}
P_{i\rightarrow j}(t) &= \Gamma_{ji} - \Psi_{ji}(\Gamma_{ij}+P_0(t))^{-1}\Psi_{ij} \\
&= \Gamma_{ji} - \Psi_{ji}\Sigma_{i\rightarrow j}(t-1)\Psi_{ij},
\end{align*}
which leads to 
\begin{align*}
P_i(t+1) &= P_{ii} + \sum_{v\in \mathcal{N}_i} P_{v\rightarrow i}(t) \\
&= P_{ii} +\sum_{v\in \mathcal{N}_i} \Gamma_{iv} - \sum_{v\in \mathcal{N}_i} \Psi_{iv}\Sigma_{v\rightarrow i}(t-1)\Psi_{vi} \\
&= \Psi_{ii} - \sum_{v\in \mathcal{N}_i} \Psi_{vi}^T\Sigma_{v\rightarrow i}(t-1)\Psi_{vi} 
= \hat{\Psi}_i(t),   
\end{align*}
which is the first part of (\ref{eq:eqiv}). Similarly, using (\ref{eq:temp11})-(\ref{eq:temp12}) and (\ref{eq8}), we obtain
\begin{align*}
&P_{i\rightarrow j}(t+1) \mu_{i\rightarrow j}(t+1) \\
=& - \Psi_{ji} (\Gamma_{ij}+P_0(t+1))^{-1} P_0(t+1)\mu_0(t+1) \\
=& - \Psi_{ji}\Sigma_{i\rightarrow j}(t-1)\Sigma_{i\rightarrow j}^{-1}(t)x_{i\rightarrow j}(t) 
= - \Psi_{ji}x_{i\rightarrow j}(t),
\end{align*}
which leads to the  the second part of (\ref{eq:eqiv}) as follows:
\begin{align*} 
\mu_i(t+1) &= P_i^{-1}(t+1)[P_{ii}\mu_{ii} +\sum_{v\in \mathcal{N}_i} P_{v\rightarrow i}(t)\mu_{v\rightarrow i}(t)] \\
&=P_i^{-1}(t+1) [\alpha_i - \sum_{v\in \mathcal{N}_i} \Psi_{iv}x_{v\rightarrow i}(t-1)]\\
&=\hat{\Psi}_i^{-1}(t) \hat{\alpha}_i(t) =\hat{x}_i(t).
\end{align*}
\end{proof}

\vspace{-4mm}\begin{rem}
{\rm Despite their equivalence, Algorithms 1 and 2 have some significant differences: 1) Algorithm 1 uses the self and edge measurements directly, whereas Algorithm 2 starts with the computed $\Psi$.  2) Algorithm 1 applies to vector variables and its more general version in \cite{damian2015} can work for measurements involving more than two variables, whereas the Gaussian BP algorithm is  for scalar variables with pairwise measurements only \cite{weiss2001}. }
\end{rem}

\vspace{-2mm} Next, we establish a crucial technical  property about the comparison matrix of $\Psi$ for the case of scalar variables.

\begin{lemma}\label{la1} Suppose Assumptions \ref{a1}-\ref{a2} hold and  all $x_i\in \mathbb{R}$. Then,  the comparison matrix $\bar{\Psi}$ (as defined in Definition~\ref{def1}) is positive definite.
\end{lemma}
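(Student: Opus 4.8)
The plan is to exploit the special sign structure that the comparison operation induces on $\Psi$ in the scalar case, reducing the claim to a rank-one decomposition followed by a connectivity argument. First I would record the entries of $\bar{\Psi}$ explicitly. For scalar variables all blocks become scalars, so from (\ref{eq:alphai}) we have $\Psi_{ii}=A_i^2/R_i+\sum_{j\in\mathcal{N}_i}B_{ij}^2/R_{(i,j)}\ge 0$ and $\Psi_{ij}=B_{ij}B_{ji}/R_{(i,j)}$ for $j\in\mathcal{N}_i$. Since $\Psi$ is symmetric and $\Psi_{ii}\ge 0$, the comparison matrix has diagonal $\bar{\Psi}_{ii}=\Psi_{ii}$ and off-diagonal $\bar{\Psi}_{ij}=-|B_{ij}B_{ji}|/R_{(i,j)}=\bar{\Psi}_{ji}$; in particular $\bar{\Psi}$ is itself symmetric, so positive definiteness is equivalent to $x^T\bar{\Psi}x>0$ for all $x\neq 0$.

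The key step is to observe that, after taking absolute values, each per-edge $2\times 2$ contribution is still rank one and positive semidefinite. Writing $e_i$ for the $i$-th standard basis vector of $\mathbb{R}^n$ and $u_e=|B_{ij}|e_i-|B_{ji}|e_j$ for each edge $e=(i,j)$, a direct check gives
\begin{align*}
\bar{\Psi}=\sum_{i\in\mathcal{V}}\frac{A_i^2}{R_i}\,e_ie_i^T+\sum_{e=(i,j)\in\mathcal{E}}\frac{1}{R_e}\,u_eu_e^T,
\end{align*}
because $u_eu_e^T$ places $B_{ij}^2/R_e$ and $B_{ji}^2/R_e$ on the two diagonal positions and $-|B_{ij}B_{ji}|/R_e$ off the diagonal, exactly matching the entries found above. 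Each summand is positive semidefinite, so $\bar{\Psi}\succeq 0$ and
\begin{align*}
x^T\bar{\Psi}x=\sum_{i\in\mathcal{V}}\frac{A_i^2}{R_i}x_i^2+\sum_{e=(i,j)\in\mathcal{E}}\frac{1}{R_e}\bigl(|B_{ij}|x_i-|B_{ji}|x_j\bigr)^2\ge 0.
\end{align*}

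It then remains to rule out a nonzero null vector. Suppose $x^T\bar{\Psi}x=0$; then every summand vanishes, so $A_ix_i=0$ for all $i$ and $|B_{ij}|x_i=|B_{ji}|x_j$ for every edge $(i,j)$. Here the mild condition enters: I would assume every edge measurement genuinely involves both endpoints, i.e.\ $B_{ij}\neq 0$ and $B_{ji}\neq 0$ for each $(i,j)\in\mathcal{E}$. Under Assumption~\ref{a2} some node $r$ carries an effective self measurement with $A_r\neq 0$, forcing $x_r=0$; then, walking along any path guaranteed by the connectivity of $\mathcal{G}$ (Assumption~\ref{a1}) and using $|B_{ij}|x_i=|B_{ji}|x_j$ with nonzero coefficients, the value $0$ propagates from $r$ to every node, giving $x=0$. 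Hence $\bar{\Psi}\succ 0$.

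The main obstacle I anticipate is not the algebra of the rank-one decomposition but pinning down the exact ``mild condition'' and making the propagation watertight: one must ensure both that at least one self measurement is effective (so the induction has a base point) and that no edge coefficient degenerates to zero (so zeros propagate across edges). It is worth checking whether a sharper hypothesis suffices, namely that the subgraph formed by edges with both coefficients nonzero, together with the self-measured nodes, spans $\mathcal{V}$, since that is the truly minimal requirement behind the connectivity argument.
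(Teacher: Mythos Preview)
Your proposal is correct and follows essentially the same route as the paper: both write the quadratic form $x^T\bar{\Psi}x$ as $\sum_i A_i^2 R_i^{-1}x_i^2+\sum_{(i,j)\in\mathcal{E}}R_{(i,j)}^{-1}(|B_{ij}|x_i-|B_{ji}|x_j)^2$, deduce positive semidefiniteness, and then propagate zeros from a self-measured node along edges via connectivity. Your explicit framing as a rank-one decomposition $\bar{\Psi}=\sum_i\frac{A_i^2}{R_i}e_ie_i^T+\sum_e\frac{1}{R_e}u_eu_e^T$ is a slightly cleaner presentation of the same computation, and your discussion of the nondegeneracy condition $B_{ij},B_{ji}\neq 0$ makes explicit an assumption the paper uses tacitly (it simply asserts ``because $B_{ji}\neq 0$'').
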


\begin{proof} 
For any nonzero vector $x\in \mathbb{R}^n$, we have
\bea
&& x^T\bar{\Psi} x\nn\\
&=&\sum\limits_{i}(\Psi_{ii}x_i^2+\sum_{j\in \mathcal{N}_i}\bar{\Psi}_{ij}x_ix_j)\nn\\
&=&\sum\limits_{i}\Big[(A^2_iR_i^{-1}\hspace{-1mm}+\hspace{-2mm}\sum_{j\in \mathcal{N}_i}(B^2_{ij}R_{(i,j)}^{-1}x_i^2-|B_{ij}B_{ji}|R_{(i,j)}^{-1}x_ix_j)\Big]\nn\\
&=&\sum_{i}A^2_iR_i^{-1}x_i^2+\frac{1}{2}\sum_{i}\sum_{j\in \mathcal{N}_i}R_{(i,j)}^{-1}(|B_{ij}|x_i-|B_{ji}|x_j)^2.\nn
\eea
It is obvious from the above that $\bar{\Psi}$ is positive semi-definite.
Suppose there exists a nonzero vector $x$ satisfies $x^T\bar{\Psi} x=0$, then we have $A^2_iR_i^{-1}x_i^2=0$ and $R_{(i,j)}^{-1}(|B_{ij}|x_i-|B_{ji}|x_j)^2=0$ for $\forall i,j\in \mathcal{V}$. Based on Assumption \ref{a2}, there is at least one node $i$ with $A_i\neq0$, so $x_i=0$. For $j\in \mathcal{N}_i$, $x_j=0$ holds (because $B_{ji}\ne0$).
Similarly, $x_k=0$ holds for all $k\in \mathcal{N}_j$. The rest components of $x$ can be done in the same manner. Because the measurement graph is connected, as in Assumption \ref{a1}, we have $x_i=0$ for all $i\in \mathcal{V}$. This contradicts the assumption that $x\ne0$. So $\bar{\Psi}$ is positive definite.
\end{proof}

We now present our main result below.

\vspace{-2mm}\begin{theorem} \label{th2}
Based on Assumptions \ref{a1} and \ref{a2}, we have the following properties for Algorithm~\ref{al1}.
\vspace{-2mm}\begin{itemize}
\item If $\mathcal{G}$ is acyclic with diameter $d$, the estimate $\hat{x}_i(k)$ obtained by running Algorithm \ref{al1} converges to the exact value $x_i^{\star}$ in $d$ iterations for all $i\in \mathcal{V}$.
\item If $\mathcal{G}$ is cyclic and $x_i$ are all scalars, the estimate $\hat{x}_i(k)$ obtained by running Algorithm \ref{al1} is asymptotically convergent to the exact value $x_i^{\star}$ as $k\rightarrow\infty$, for all $i\in \mathcal{V}$.
\item  Moreover, the convergence rate of $\hat{x}_i(k)$ for a cyclic graph $\mathcal{G}$ with scalar variables is bounded as follows:
\begin{align}
|\hat{x}_i(k)-x_i^{\star}|&\leq\rho(\bar{\Omega})^kC \label{eq:bbb}
\end{align}
for each $i\in \mathcal{V}$, where $\Omega=(\omega_{ij})=I-D^{-1}\Psi$, $\bar{\Omega}=(|\omega_{ij}|)$, $D=\mathrm{diag}\{\Psi_{ii}\}$, and $C>0$ is a constant.
\end{itemize}
\end{theorem}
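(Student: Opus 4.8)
The plan is to prove all three parts by transferring the problem to the Gaussian BP algorithm through Theorem~\ref{prop1}, and then appealing to the known convergence theory for Gaussian BP, with Lemma~\ref{la1} supplying exactly the structural hypothesis that makes that theory applicable. Concretely, by \eqref{eq:eqiv} the iterates $\hat{x}_i(t)$ of Algorithm~\ref{al1} coincide with the means $\mu_i(t+1)$ produced by Algorithm~\ref{al2}, so throughout it suffices to analyse the latter and then translate back via the equivalence.

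For the acyclic case I would invoke the classical fact that on a tree the BP message recursion is exact: once a message has propagated along each path it ceases to change, and the resulting marginal mean equals the true conditional mean, i.e. the $i$-th block of $\Psi^{-1}\alpha=x^{\star}$. Since every node is reachable from every other within $d$ steps ($d$ being the diameter), the messages feeding node $i$ stabilise after at most $d$ iterations, so $\mu_i$ attains $x_i^{\star}$ within $d$ iterations; by \eqref{eq:eqiv} the same holds for $\hat{x}_i$, which is then fixed thereafter. This also recovers the finite-step result of \cite{damian2015}.

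For the cyclic scalar case the heart of the matter is to convert Lemma~\ref{la1} into the hypothesis under which Gaussian BP is guaranteed to converge. Since $\bar{\Psi}\in Z^{n\times n}$ is symmetric and, by Lemma~\ref{la1}, positive definite, it is a nonsingular symmetric M-matrix; by Definition~\ref{def2} this is precisely the statement that $\Psi$ is generalized diagonally dominant. Writing $D=\mathrm{diag}\{\Psi_{ii}\}>0$ and $\Omega=I-D^{-1}\Psi$, a direct entrywise check gives $D^{-1}\bar{\Psi}=I-\bar{\Omega}$ with $\bar{\Omega}\ge 0$, so $\bar{\Psi}$ being a nonsingular M-matrix is equivalent to $\rho(\bar{\Omega})<1$. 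This is exactly the walk-summability condition under which the Gaussian BP algorithm of \cite{weiss2001} is known to converge and, whenever it converges, to return the exact conditional means. Hence $\mu_i(t)\to x_i^{\star}$, and via \eqref{eq:eqiv} we get $\hat{x}_i(k)\to x_i^{\star}$, establishing asymptotic convergence.

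For the rate bound \eqref{eq:bbb} I would use the walk-sum interpretation of Gaussian BP: the estimate $\mu_i(t)$ corresponds to the sum over all walks on the computation tree rooted at $i$ of depth growing with $t$, while $x_i^{\star}$ is the full walk-sum over the loopy graph; the nonnegative matrix governing these walk-sums is $\bar{\Omega}$. The discrepancy between the depth-$k$ partial walk-sum and the total walk-sum is therefore dominated by the tail $\sum_{\ell\ge k}[\bar{\Omega}^{\ell}]_{i\,\cdot}$, which, because $\rho(\bar{\Omega})<1$, is $O(\rho(\bar{\Omega})^k)$; absorbing the implied constants into a single $C>0$ yields $|\hat{x}_i(k)-x_i^{\star}|\le \rho(\bar{\Omega})^{k}C$. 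I expect this final step to be the main obstacle: one must control the walk-sum tail precisely enough to extract the clean geometric factor $\rho(\bar{\Omega})^{k}$ rather than a weaker $(\rho(\bar{\Omega})+\varepsilon)^{k}$, and to confirm that $C$ can be chosen uniformly in $i$, which hinges on a submultiplicativity/Perron-root estimate for the powers of the nonnegative matrix $\bar{\Omega}$.
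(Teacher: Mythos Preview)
Your proposal is correct and follows essentially the same strategy as the paper: use Lemma~\ref{la1} to show that $\bar{\Psi}$ is a positive-definite $Z$-matrix, hence a nonsingular M-matrix, hence $\Psi$ is generalized diagonally dominant (equivalently $\rho(\bar{\Omega})<1$, i.e.\ walk-summable), and then invoke the known Gaussian BP convergence theory of \cite{M} and \cite{rate} for parts two and three. Two minor differences are worth noting. For the acyclic case, the paper does not go through Theorem~\ref{prop1} and BP tree-exactness as you do; it instead observes that Lemma~\ref{la1} makes $\Psi$ positive definite, so $H$ has full column rank, and then cites Theorem~11 of \cite{damian2015} directly to get $\hat{x}_i(k)=x_i^{\star}$ for $k\ge d$. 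Your route via BP exactness is equally valid and arguably more in the spirit of the paper's equivalence result. For the rate bound, the paper simply cites \cite{rate} rather than carrying out the walk-sum tail estimate you sketch; your concern about extracting the clean factor $\rho(\bar{\Omega})^k$ rather than $(\rho(\bar{\Omega})+\varepsilon)^k$ is well placed, and the paper sidesteps it by deferring to that reference. Also note a small imprecision in your sketch: $\mu_i(t)$ is the walk-sum on the depth-$t$ \emph{computation tree} rooted at $i$, not a truncated walk-sum on $\mathcal{G}$ itself, so the tail you must control is the discrepancy between those two, which is handled in \cite{M,rate}.
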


\begin{proof}
Using Lemma 1, we know that $\Psi=H^TR^{-1}H$ is positive definite. In particular, the matrix $H$ has full column rank, and recall that $R$ is invertible. It follows that Assumptions 2 and 11 in \cite{damian2015} hold, which in turn means that Theorem~11 of \cite{damian2015} holds. More specifically, $\hat{x}_i(k)=x_i^{\star}$ for all $k\ge d_i$, where $d_i$ is the radius of node $i$ which is defined as the maximum distance between node $i$ and any other node in the graph. It follows that $\hat{x}_i(k)=x_i^{\star}$ for all $k\ge d$ because $d\ge d_i$ for all $i$. 
\\
Next, we show that $\Psi$ is generalized diagonally dominant under Assumptions \ref{a1}-\ref{a2} and scalar variables. Indeed, from Lemma 1, we have $\bar{\Psi}_{ij}\leq 0$ for $\forall i,j\in\{1,\cdots,n\}$ with $i\ne j$, and $\bar{\Psi}$ is positive definite. Then, $\bar{\Psi}$ is M-matrix according to \cite {berman1994}. Thus, $\bar{\Psi}$ is generalized diagonally dominant (see (M35), Th6.2.3 in \cite {berman1994}). It means that there exists a diagonal matrix $D=\mathrm{diag}\{d_1,\cdots,d_n\}$, $\forall d_i>0$, such that $\bar{\Psi}D$ is strictly diagonally dominant. That is, for each $i\in\{i,\cdots,n\}$,
$d_i\Psi_{ii}>\sum_{j\neq i}|\Psi_{ij}|d_j.$
Then, $\Psi D$ is strictly diagonally dominant, i.e. $\Psi$ is generalized diagonally dominant.
\\
Finally, since $\Psi$ is generalized diagonally dominant, the asymptotic convergence result for a general (loopy) graph with scalar variables follows from \cite{M}, and the convergence rate result follows from the work \cite{rate}. Indeed, the convergence rate for this algorithm in \cite{rate} is presented for $Ax = b$ when $A$ is symmetric. We can obtain the result by substituting $\Psi$ for $A$.
\end{proof}

\vspace{-5mm}\begin{rem}
{\rm Note that for any node $i$, the information from a far-away node $j$ is gradually passed on to node $i$ through neighbourhood communication. If we view the iterations as a dynamic process, the estimate $\hat{x}_i(t)$ in (\ref{xxx}) is an estimate of the global optimal solution $x_i^{\star}$ conditioned on the filtration generated by  the measurements from all the nodes that are within $t$ hops away from node $i$. 
The asymptotic convergence result in Theorem~\ref{th2} shows that, for edge measurements, this optimality holds asymptotically, i.e., $\hat{x}_i(t)$ is indeed the optimal estimate of $x_i^{\star}$ conditioned on this filtration as $t \rightarrow \infty$. }
\end{rem}

\vspace{-1mm}\section{Example}
Consider a loopy network with $13$ nodes in Fig. \ref{fig1}. There are two nodes without self measurement, i.e., for $i\in\{2,5\}$, $A_i=0$. The other nonzero $A_i$ and all the $B_{ij}$ are chosen randomly. Fig. \ref{fig2} shows the estimation error by Algorithm~\ref{al1}, where $y_1$ denotes the error measure defined by 
$y_1 = \log_{10}\{\sum_{i}(\hat{x}_i(k)-x_i^{\star})^2/n\}.$
  Also shown is the convergence rate bound (\ref{eq:bbb}). 
  We see that the convergence rate of Algorithm~\ref{al1} is faster than the rate of $\rho(\bar{\Omega})$.  

 \begin{figure}[htbp]
  \begin{center}
  \includegraphics[width=0.15\textwidth]{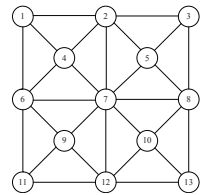}
   \caption{The network with 13 nodes} \label{fig1}
  \end{center}
\end{figure}

 \begin{figure}[htbp]
  \begin{center}
  \includegraphics[width=0.4\textwidth]{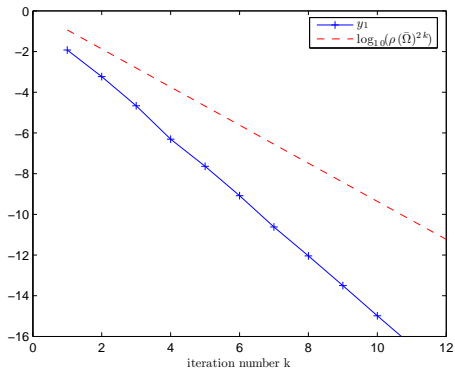}
   \caption{Estimation error  by Algorithm \ref{al1}} \label{fig2}
  \end{center}
\end{figure}

\section{Conclusion}
We have studied a fast distributed algorithm for the WLS estimation problem for a linear measurement network. We have provided an interpretation of this algorithm using the Gaussian BP algorithm, when only self measurements and edge measurements are involved. For scalar variables, we show that this algorithm computes asymptotically the correct (globally optimal) WLS solution for a general  network graph . We conjecture that similar properties hold for vector variables, but its analysis is challenging because couplings within a vector variables also need to be considered and that no results can be borrowed from  Gaussian BP in this case.

\end{document}